\theoremstyle{plain}
\numberwithin{equation}{section}
\newtheorem{theorem}{Theorem}[section]
\theoremstyle{definition}
\newtheorem{definition}[theorem]{Definition}
\newtheorem{example}[theorem]{Example}
\theoremstyle{remark}
\numberwithin{equation}{section}
\newcommand{\bR}{{\mathbb R}}
\newcommand{\cB}{{\mathcal B}}
\newcommand{\cH}{{\mathcal H}}
\newcommand{\cN}{{\mathcal N}}
\newcommand{\cS}{{\mathcal S}}
\newcommand{\ket}[1]{\left\vert #1\right\rangle}
\newcommand{\bra}[1]{\left\langle #1\right\vert}
\renewcommand{\Re}{\,\mathrm{Re}\,}   
\renewcommand{\Im}{\,\mathrm{Im}\,}   
\def\idty{{\mathchoice {\mathrm{1\mskip-4mu l}} {\mathrm{1\mskip-4mu l}} %
{\mathrm{1\mskip-4.5mu l}} {\mathrm{1\mskip-5mu l}}}}
\newcommand{\Tr}{\mathrm{Tr}}
\newcommand{\be}{\begin{equation}}
\newcommand{\ee}{\end{equation}}
\newcommand{\bea}{\begin{eqnarray}}
\newcommand{\eea}{\end{eqnarray}}
\newcommand{\beann}{\begin{eqnarray*}}
\newcommand{\eeann}{\end{eqnarray*}}
\begin{document}

\title{Closest separable state when measured by a quasi-relative entropy}
\author{Anna Vershynina}
\affil{\small{Department of Mathematics, Philip Guthrie Hoffman Hall, University of Houston, 
3551 Cullen Blvd., Houston, TX 77204-3008, USA}}
\renewcommand\Authands{ and }
\renewcommand\Affilfont{\itshape\small}

\date{\today}

\maketitle

\begin{abstract} It is well known that for pure states the relative entropy of entanglement is equal to the reduced entropy, and the closest separable state is explicitly known as well. The same holds for Renyi relative entropy per recent results. We ask the same question for a quasi-relative entropy of entanglement, which is an entanglement measure defined as the minimum distance to the set of separable state, when the distance is measured by the quasi-relative entropy. First, we consider a maximally entangled state, and show that the closest separable state is the same for any quasi-relative entropy as for the relative entropy of entanglement. Then, we show that this also holds for a certain class of functions and any pure state. And at last, we consider any pure state on two qubit systems and a large class of operator convex function. For these, we find the closest separable state, which may not be the same one as for the relative entropy of entanglement. 

\end{abstract}

\section{Introduction}

Entanglement describes strong quantum ``connections" (stronger than any classical ones) that parts of a system can posses even when separated by long distance and having no connecting matter in between. It found applications in quantum optics \cite{BKAB00, LZYZ03}, nuclear magnetic resonance \cite{NKL98, PZFFLG03}, condensed mater physics \cite{FKR04, LSADSS07}, quantum cryptography \cite{DCGZ03, LWGZ08, SWGZ09}, and quantum algorithms and protocols \cite{BB84, BBCJPW93}. Understanding the behavior of entanglement in a system, its creation and loss (decoherence), is one of the fundamental goals of quantum science. For an overview on entanglement see \cite{4H09, PV14}.

The amount of entanglement in a quantum state can be captured using entanglement measures. Any such measure, $E(\rho)$ of a quantum state $\rho$, should fulfill at least the following criteria \cite{VP98}:
\begin{enumerate}
\item[(E1)] $E(\rho)\geq 0$, and $E(\rho)=0$ if and only if $\rho$ is separable;
\item[(E2)] Invariance under local unitary operations: $E(\rho)=E(U_A\otimes U_B\rho U_A^*\otimes U_B^*)$.
\item[(E3)] $E$ cannot increase under local operations and classical communication (LOCC), i.e. $E(\Lambda(\rho))\leq E(\rho)$ for any LOCC map $\Lambda$.
\end{enumerate}
It's interesting to note that these conditions are not set in stone for all entanglement measures. They can be relaxed, strengthen or new ones added depending on specific circumstances one would like to consider. For example, even the simples condition (E1) is not satisfied for a distillable entanglement, as it is zero on the bound entangled states \cite{3H98}, which are entangled. However, the measure is nice enough for us to overlook that lapse, explaining that there are various kinds of entanglement, all of which cannot be captured by one single measure, yet. New conditions on entanglement measures sometimes take the form of normalization on the maximally entangled states, convexity, additivity, continuity, and reduction to the marginal state on pure states \cite{3H00, VPRK97}.

A large class of entanglement measures is defined using a 'distance' between two quantum states \cite{VP98, VPRK97}. For a distance measure $D(\rho\|\sigma)$ between two quantum states $\rho$ and $\sigma$, which is not required to be a metric, entanglement measure is defined as
$$E(\rho):=\min_{\sigma\in\cS}D(\rho\|\sigma)\ , $$
where $\cS$ is the set of separable states. 

This construction gives entanglement measures such as: relative entropy of entanglement \cite{VP98}, Bures measure of entanglement \cite{VP98}, several R\'enyi relative entropies of entanglement \cite{TUU18, V19, ZHC17}, Tsallis relative entropy of entanglement \cite{V19}, Groverian measure of entanglement \cite{BNO02, SSB06, SKB10}, and possibly others.

In \cite{VP98, VPRK97} a set of conditions on the distance was given in order to induce an entanglement measure that satisfies conditions (E1)-(E3). These conditions are:
\begin{enumerate}
\item[(D1)]  $D(\rho\|\sigma)=0$ if and only if $\rho=\sigma$.
\item[(D2)] Invariance under unitary operations: $D(\rho\|\sigma)=D(U\rho U^*\|U\sigma U^*)$.
\item[(D3)] Data processing inequality (or monotonicity under completely positive trace-preserving (CPTP) maps): $D(\cN(\rho)\|\cN(\sigma))\leq D(\rho\|\sigma)$, for any CPTP map $\cN$.
\end{enumerate}

In \cite{VPRK97} it was shown that a relative entropy was a good candidate for a distance, since it induces an entanglement measure that satisfies (E1)-(E3). The relative entropy is defined as
$$S(\rho\|\sigma)=\Tr(\rho\log\rho)-\Tr(\rho\log\sigma) \ .$$
It was also shown in \cite{VPRK97} that a relative entropy of entanglement reduces to the marginal entropy on a maximally entangled state, and later in \cite{VP98} it was generalized to all pure states: for a pure state $\ket{\Psi}=\sum_j\sqrt{p_j}\ket{jj}$, the closest separable state when measured by the relative entropy is $\sigma^*=\sum_jp_j\ket{jj}\bra{jj}$ and the entropy of entanglement is $E_r(\ket{\Psi}\bra{\Psi})=-\sum_jp_j\log p_j$.
Sometimes this property of the relative entropy is elevated to a condition on an entanglement measure:
\begin{itemize}
\item[(E4)] Entanglement of a pure state $\rho=\ket{\Psi}\bra{\Psi}$ is the marginal entropy, i.e. $E(\ket{\Psi})=S(\rho_A)$, where $\rho_A=\Tr_B\ket{\Psi}\bra{\Psi}$, and $S(\rho)=-\Tr(\rho\log\rho)$ is the quantum entropy.
\end{itemize}

Similar results were found for the Renyi entropy of entanglement, defined by the minimal distance to the space of separable states when the distance is taken to be the Renyi relative entropy
$$S^R_\alpha(\rho\|\sigma)= \frac{1}{\alpha-1}\log\Tr(\rho^\alpha\sigma^{1-\alpha})\ .$$
An explicit expression was found for the  Renyi entropy of entanglement in \cite{CVGG20, ZHC17}. In \cite{CVGG20}, it was shown that for a pure state $\ket{\Psi}=\sum_j\sqrt{p_j}\ket{jj}$ and $\alpha<2$, the closest separable state is 
$$\sigma=\frac{1}{P_\alpha}\sum_j p_j^{1/\alpha} \ket{jj}\bra{jj}\ ,$$
where $P_\alpha=\sum_j p_j^{1/\alpha}.$ The Renyi entropy of entanglement is then given by
$$E_\alpha^R(\Psi)=\frac{\alpha}{\alpha-1}\log\sum_j p_j^{1/\alpha}\ . $$ 

We consider a large class of distances, called quasi-relative entropy, defined as
$$S_f(\rho|| \sigma)=\Tr(f(\Delta_{\sigma,\rho}){\rho})\ ,$$
where the relative modular operator, $\Delta_{A,B}(X)=L_AR_B^{-1}(X)=AXB^{-1}$. This class includes the regular relative entropy, the exponential Renyi relative entropy, and scaled Tsallis relative entropy.

Since the quasi-relative entropy satisfies conditions (D1)-(D3), the entanglement measure it induces, satisfy (E1)-(E3). The question now is what about (E4)? Or, what is the closest separable state to a pure entangled state when measured by the quasi-relative entropy?

This optimization question was asked in \cite{GGF14}, where the authors developed a method of solving this problem for any convex function on the set of quantum states. While authors considered the case of quasi-relative entropy as an example, they focused more on developing a general method rather than giving an explicit solution, as we aim in this manuscript. The method we are using is similar, if not the same, to the method in \cite{GGF14}, but it could be simplified and traced back to the original relative entropy of entanglement paper \cite{VP98}.

In Theorem \ref{thm:max} we consider a maximally entangled (Bell) state $\ket{\Psi^+}=\sum\frac{1}{\sqrt{d}}\sum_j\ket{jj}$, and show that the closest state is $\sigma^*=\sum\frac{1}{d}\sum_j\ket{jj}\bra{jj}$, which means that the quasi-relative entropy of entanglement is equal to $$E_f(\ket{\Psi^+})=f(1/d)\ .$$

In Theorem \ref{Entropy} we consider any pure state, and a class of functions $f$ that define quasi-relative entropy. In these conditions, we show that the closest separable state to a pure state $\ket{\Psi}=\sum_j\sqrt{p_j}\ket{jj}$ is the state $\sigma^*=\sum_jp_j\ket{jj}\bra{jj}$. The quasi-relative entropy of entanglement is then $$E_f(\ket{\Psi})=\sum_jp_jf(p_j)\ .$$

In Theorem \ref{thm:qubit} we take any qubit state $\ket{\Psi}=\sqrt{p}\ket{00}+\sqrt{1-p}\ket{11}$ and a large class of functions. We show that the closest separable state  is  in a form $\sigma^*=q\ket{00}\bra{00}+(1-q)\ket{11}\bra{11}$. We explicitly define these $q_j$'s in the proof of the theorem. In this case, the quasi-relative entropy becomes
$$E_f(\ket{\Psi}\bra{\Psi})=\sum_jp_jf(q_j)\ . $$

\section{Preliminaries (for more details see \cite{B97, CV-18, V19-2, V19-3})}
\subsection{Operator Monotone Functions}
Denote the set of operator monotone decreasing functions $f$ as $\mathcal{Q}_{(0,\infty)}$.

\begin{example}
From \cite[Exercise V.4.8]{B97} The following functions belong to $\mathcal{Q}_{(0,\infty)}$: 
\begin{itemize}
\item $f(x)=-\log x$, 
\item $f(x)=-x^p$ for $p\in[0,1]$, 
\item $f(x)=x^p$ for $p\in[-1,0]$.
\end{itemize}
\end{example}

According to \cite[Chapter II, Theorem I]{Dono} every function  $f\in\mathcal{Q}_{(0,\infty)}$, has a canonical integral representation 
\begin{equation}\label{low}
f(x) = -a_f x- b_f +\int_{0}^\infty \left(  \frac{1}{t +x }-\frac{t}{t^2+1}   \right){\rm d}\mu_f(t)\ ,
\end{equation}
where  $a_f:=-\lim_{y\uparrow\infty}\frac{f(iy)}{iy}\geq 0$, $b_f:=-\Re f(i)\in\bR$ and $\mu_f$ is a positive measure on $(0,\infty)$ such that 
${\displaystyle \int_{0}^\infty  \frac{1}{t^2+1}{\rm d}\mu_f(t)<\infty}$, and
 \begin{equation}\label{muform}
 \mu_f(x_1) - \mu_f(x_0) = -\lim_{y\downarrow 0} \frac{1}{\pi} \int_{x_0}^{x_1} \Im f(-x+iy){\rm d} x\ .
 \end{equation}
Conversely, every such function belongs in $\mathcal{Q}_{(0,\infty)}$.

We consider functions $f\in\mathcal{Q}_{(0,\infty)}$ such that $f(1)=0$. The last condition is equivalent to
$$0=f(1) = -a_f - b_f +\int_{0}^\infty \left(  \frac{1}{t +1 }-\frac{t}{t^2+1}   \right){\rm d}\mu_f(t)\ ,
 $$
 in other words,
 \begin{equation}\label{eq:f10}
 a_f+b_f=\int_{0}^\infty \left(  \frac{1}{t +1 }-\frac{t}{t^2+1}   \right){\rm d}\mu_f(t)\ .
 \end{equation}
 Therefore, the operator monotone decreasing function $f$ such that $f(1)=0$ has the following integral representation
 \begin{equation}\label{low2}
f(x) = a_f(1-x)+ \int_{0}^\infty \left(  \frac{1}{t +x }- \frac{1}{t +1 }\right){\rm d}\mu_f(t)\ .
\end{equation}

\begin{example}\label{ex-power}
Consider the power function $f(x)=-x^p$ for $p\in(0,1)$. It is operator monotone decreasing. Then 
$$a_f = -\lim_{y\uparrow\infty}f(iy)/(iy) = 0\ , \ \ \text{and } \ b_f = {\cos}(p \pi/2)\ .$$
For $x>0$, $\lim_{y\downarrow 0} \Im f(-x + iy) = -x^p\sin(p \pi)$ so that
$${\rm d}\mu_f(x) = \pi^{-1}\sin(p \pi) x^p{\rm d}x\ .$$ This yields the representation
\begin{equation}\label{eq:powex}
-x^p =  -{\cos}(p \pi/2)  + \frac{\sin(p \pi)}{\pi} \int_{0}^\infty t^p\left(  
\frac{1}{t +x } -\frac{t}{t^2+1}  \right){\rm d}t \ .
\end{equation}

\begin{example}\label{ex-log} Let $f(x)=-\log(x)$. It is operator monotone decreasing.
Then 
$$b_f=\Re(\log(i)) = 0\ ,$$ and 
$$a_f=\lim_{y\uparrow\infty}\log(iy)/(iy) = \lim_{y\uparrow\infty}(\log y + i\pi/2)/(iy) =0\ . $$ It is clear from \eqref{muform} that
$${\rm d}\mu_f(x) = \frac{1}{\pi}\lim_{y\downarrow 0}\Im\log(-x + iy){\rm d}x = {\rm d}x\ .$$
Then the  integral representation (\ref{low}) gives the following formula for the logarithmic function
\begin{equation}\label{eq:logex}
-\log x=\int_{0}^\infty \left( \frac{1}{t +x } - \frac{t}{t^2+1}  \right){\rm d}t \ ,
\end{equation}
which is also obvious from the direct computation of the integral.
\end{example}
\end{example}

\subsection{Quasi-relative entropy}
Quantum quasi-relative entropy was introduced by Petz \cite{P85, P86} as a quantum generalization of a classical Csisz\'ar's $f$-divergence \cite{C67-2}. It is defined in the context of von Neumann algebras, but we  consider only the Hilbert space setup. Let $\cH$ be a  finite-dimensional  Hilbert space, and $\rho$ and $\sigma$ be two states (given by density operators).

\begin{definition}\label{def:qre}
For an operator monotone decreasing function $f$, such that $f(1)=0$, and strictly positive states $\rho$ and $\sigma$ acting on a finite-dimensional Hilbert space $\cH$, {\it the quasi-relative entropy} (or sometimes referred to as {\it the $f$-divergence}) is defined as 
$$S_f(\rho|| \sigma)=\Tr(f(\Delta_{\sigma,\rho}){\rho})\ ,$$
where the relative modular operator, introduced by Araki \cite{A76}, $$\Delta_{A,B}(X)=L_AR_B^{-1}(X)=AXB^{-1}$$ is a product of left and right multiplication operators, $L_A(X)=AX$ and $R_B(X)=XB$. Throughout this paper we consider finite-dimensional setup, so the operators are invertible. (In general, $A^{-1}$ is stands for the generalized inverse of $A$.)
\end{definition}

There is a straightforward way to calculate the quasi-relative entropy from the spectral decomposition of states. Let $\rho$ and $\sigma$ have the following spectral decomposition
\begin{equation}\label{eq:spectral}
\rho=\sum_j\lambda_j\ket{\phi_j}\bra{\phi_j}, \ \ \sigma=\sum_k\mu_k\ket{\psi_k}\bra{\psi_k}\ ,
\end{equation}
where the eigenvalues are ordered: 
$$\lambda_n\leq \dots\leq \lambda_1, \ \ \mu_n\leq \dots\leq \mu_1\ . $$
 the set $\{\ket{\phi_k}\bra{\psi_j}\}_{j,k}$ forms an orthonormal basis of $\cB(\cH)$, the space of bounded linear operators, with respect to the Hilbert-Schmidt inner product defined as $\langle A, B \rangle=\Tr(A^*B)$. By \cite{V16}, the modular operator can be written as
\begin{equation}\label{eq:modular}
\Delta_{\sigma, \rho}=\sum_{j,k} \frac{\mu_k}{\lambda_j}P_{j,k}\ ,
\end{equation}
where $P_{j,k}:\cB(\cH)\rightarrow\cB(\cH)$ is defined by
$$P_{j,k}(X)=\ket{\psi_k}\bra{\phi_j}\bra{\psi_k}X\ket{\phi_j}\ . $$
The quasi-relative entropy is calculated as follows
\begin{equation}\label{eq:formula}
S_f(\rho||\sigma)=\sum_{j,k}\lambda_j f\left(\frac{\mu_k}{\lambda_j}\right)|\bra{\psi_k}\ket{\phi_j}|^2\ . 
\end{equation}

\begin{example} For $f(x)=-\log x$, the quasi-relative entropy becomes the Umegaki relative entropy
$$S_{-\log}(\rho\|\sigma)=S(\rho\|\sigma)=\Tr (\rho\log\rho-\rho\log\sigma)\ . $$
\end{example}

\begin{example}
For $p\in(-1,2)$ and $p\neq 0,1$ let us take the function 
$$f_p(x):=\frac{1}{p(1-p)}(1-x^p)\ ,$$
which is {operator} convex. The quasi-relative entropy for this function is calculated to be
$$S_{f_p}(\rho|| \sigma)=\frac{1}{p(1-p)}\left(1-\Tr(\sigma^{p}\rho^{1-p})\right)\ .$$
\end{example}

\begin{example}
For $p\in(-1,1)$ take $q=1-p\in(0,2)$, the function
$$f_q(x)=\frac{1}{1-q}(1-x^{1-q}) $$
is operator convex. The quasi-relative entropy for this function is known as Tsallis $q$-entropy
$$S_q(\rho\|\sigma)= \frac{1}{1-q}\left(1-\Tr(\rho^{q}\sigma^{1-q})\right)\ .$$
\end{example}

\section{Quasi-relative entropy of entanglement}

Quasi-relative entropy of entanglement is defined as follows
$$E_f(\rho)=\min_{\sigma\in\cS}S_f(\rho\|\sigma)\ , $$
where $\cS$ is the set of separable states.

Note that the quasi-relative entropy satisfies all conditions (D1)-(D3). Condition (D1) was explicitly proved in \cite{V19-2}, or could easily be derived from Pinsker inequality \cite{HM16}. Condition (D2) is clear from the explicit form of the quasi-relative entropy (\ref{eq:formula}).

Throughout the paper consider a bipartite system $AB$ with finite-dimensional Hilbert spaces $\cH_A$ and $\cH_B$. Let $\{\ket{j}_A\}_j$ and $\{\ket{j}_B\}_j$ denote some bases for each Hilbert space (we will drop the subscript $A$ and $B$ since these are the only two systems we consider). For a probability distribution $\{p_j\}_{j}$ (i.e. $0\leq p_j\leq 1$ and $\sum_j p_j=1$) and a basis $\{\ket{j}\}_j$ of each finite-dimensional Hilbert space, denote a separable state
\begin{equation}\label{eq:sigma-prob}
\sigma(\{p_j\})=\sum_jp_j\ket{jj}\bra{jj}_{AB}\ .
\end{equation}

\begin{theorem}\label{thm:max} Let $f$ be an operator  monotone decreasing function, such that $f(1)=0$. 
For a  maximally entangled  pure state $\ket{\Psi^+}=\sum_j\frac{1}{\sqrt{d}} \ket{jj}$, the quasi-relative entropy of entanglement is reached for a state $\sigma(\{\frac{1}{d}\})=\sum_j\frac{1}{d}\ket{jj}\bra{jj}$, and becomes
$$E_f(\ket{\Psi^+}\bra{\Psi^+})= f(1/d)\ . $$
\end{theorem}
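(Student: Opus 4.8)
The plan is to reduce the quasi-relative entropy of the pure state to a scalar expectation value, and then to minimise it by combining Jensen's inequality with the largest possible overlap between a separable state and the maximally entangled state.

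First I would specialise the spectral formula \eqref{eq:formula} to $\rho=\ket{\Psi^+}\bra{\Psi^+}$. Since $\rho$ has the single nonzero eigenvalue $\lambda_1=1$ with eigenvector $\ket{\Psi^+}$, every term with $\lambda_j=0$ carries the prefactor $\lambda_j=0$ and disappears (the operator $\rho^{1/2}=\rho$ on which the modular operator acts has no component in those directions, and $f(\Delta_{\sigma,\rho})$ is diagonal in the eigenbasis, so it does not mix them back in). Writing $\sigma=\sum_k\mu_k\ket{\psi_k}\bra{\psi_k}$, this collapses \eqref{eq:formula} to
\[
S_f(\ket{\Psi^+}\bra{\Psi^+}\|\sigma)=\sum_k f(\mu_k)\,|\langle\psi_k|\Psi^+\rangle|^2=\bra{\Psi^+}f(\sigma)\ket{\Psi^+}\ ,
\]
where the sum runs over all eigenvalues of $\sigma$; the numbers $w_k:=|\langle\psi_k|\Psi^+\rangle|^2$ satisfy $\sum_k w_k=1$, and a vanishing $\mu_k$ with $w_k>0$ produces the value $f(0^+)$, which is exactly the (possibly infinite) support penalty.

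Next I would produce the lower bound. The representation \eqref{low2} shows directly that $f$ is convex on $(0,\infty)$, since $f''(x)=\int_0^\infty 2(t+x)^{-3}\,{\rm d}\mu_f(t)\ge 0$, and that $f$ is decreasing, since $f'(x)=-a_f-\int_0^\infty (t+x)^{-2}\,{\rm d}\mu_f(t)\le 0$. Applying Jensen's inequality to the probability weights $\{w_k\}$ gives
\[
S_f(\ket{\Psi^+}\bra{\Psi^+}\|\sigma)=\sum_k w_k f(\mu_k)\ge f\Bigl(\sum_k w_k\mu_k\Bigr)=f\bigl(\bra{\Psi^+}\sigma\ket{\Psi^+}\bigr)\ .
\]
I would then bound the overlap $\bra{\Psi^+}\sigma\ket{\Psi^+}$ over the separable set $\cS$: for a product vector $\ket{a}\otimes\ket{b}$ one has $\langle\Psi^+|a\otimes b\rangle=\tfrac{1}{\sqrt d}\sum_j\langle j|a\rangle\langle j|b\rangle$, so Cauchy--Schwarz gives $|\langle\Psi^+|a\otimes b\rangle|^2\le 1/d$, and convexity extends this to $\bra{\Psi^+}\sigma\ket{\Psi^+}\le 1/d$ for every $\sigma\in\cS$. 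Since $f$ is decreasing, $f(\bra{\Psi^+}\sigma\ket{\Psi^+})\ge f(1/d)$, whence $S_f(\ket{\Psi^+}\bra{\Psi^+}\|\sigma)\ge f(1/d)$ for all separable $\sigma$.

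Finally I would check that this bound is attained by $\sigma^*=\sigma(\{1/d\})=\sum_j\tfrac1d\ket{jj}\bra{jj}$. This state is manifestly separable, and a direct computation shows $\sigma^*\ket{\Psi^+}=\tfrac1d\ket{\Psi^+}$, so $\ket{\Psi^+}$ is an eigenvector of $\sigma^*$ with eigenvalue $1/d$; the displayed formula then returns $\bra{\Psi^+}f(\sigma^*)\ket{\Psi^+}=f(1/d)$, matching the lower bound. Therefore $E_f(\ket{\Psi^+}\bra{\Psi^+})=\min_{\sigma\in\cS}S_f=f(1/d)$, attained at $\sigma^*$. I expect the only delicate point to be the first step: verifying that the zero eigenvalues of the pure state $\rho$ genuinely drop out of \eqref{eq:formula} while the zero eigenvalues of $\sigma$ are retained (so that the support condition is respected), i.e. pinning down the generalised-inverse convention so that the clean identity $S_f(\ket{\Psi^+}\bra{\Psi^+}\|\sigma)=\bra{\Psi^+}f(\sigma)\ket{\Psi^+}$ holds with no spurious additive term.
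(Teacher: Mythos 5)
Your proposal is correct and follows essentially the same route as the paper's proof: reduce $S_f(\ket{\Psi^+}\bra{\Psi^+}\|\sigma)$ to $\bra{\Psi^+}f(\sigma)\ket{\Psi^+}$, apply Jensen's inequality using convexity and monotonicity of $f$, bound the overlap $\bra{\Psi^+}\sigma\ket{\Psi^+}\le 1/d$ over separable states, and verify attainment at $\sigma(\{\frac{1}{d}\})$. The only differences are cosmetic: you prove the overlap bound directly by Cauchy--Schwarz where the paper cites \cite{4H09}, and you supply extra detail on the convexity of $f$ and on the vanishing of the zero-eigenvalue terms of $\rho$ in \eqref{eq:formula}.
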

\begin{proof}
For any state $\sigma=\sum_k\mu_k\ket{\psi_k}\bra{\psi_k}$, the quasi-relative entropy is 
$$S_f(\ket{\Psi^+}\bra{\Psi^+}\|\sigma)=\sum_kf(\mu_k)|\bra{\psi_k}\ket{\Psi^+}|^2=\bra{\Psi^+}f(\sigma)\ket{\Psi^+}\ . $$
Since $f$ is convex and monotonically decreasing, we have
\begin{align}
\min_{\sigma\in\cS}S_f(\ket{\Psi^+}\bra{\Psi^+}\|\sigma) &=\min_{\sigma\in\cS}\bra{\Psi^+}f(\sigma)\ket{\Psi^+}\\
&\geq \min_{\sigma\in\cS}f\left(\bra{\Psi^+}\sigma\ket{\Psi^+}\right)\\
&\geq f\left(\max_{\sigma\in\cS}\bra{\Psi^+}\sigma\ket{\Psi^+}  \right)=f(1/d)\ .
\end{align}
The last equality is due to \cite{4H09}, for example.
All inequalities are reached for $\sigma(\frac{1}{d})$, and $S_f(\ket{\Psi^+}\bra{\Psi^+}\|\sigma(\{\frac{1}{d}\}))=f(1/d)$.
\end{proof}

The proof of the following theorem is inspired by \cite{VP98}.

\begin{theorem}\label{Entropy} Let $f$ be an operator monotone decreasing  function, such that $f(1)=0$ and $a_f=0$ in (\ref{low}). Let $\ket{\Psi}=\sum_j \sqrt{p_j}\ket{jj}$ be a pure state. If there exists a constant $H_f>0$ such that for any $p\in[0,1]$
\begin{equation}\label{eq:C-f}
\int_0^\infty p (t+p)^{-2}d\mu_f(t)=H_f\ ,
\end{equation}
where $\mu_f(t)$ is the measure in the integral representation (\ref{low}), then the quasi-relative entropy of entanglement is {reached} at a state $\sigma(\{p_j\})$ (\ref{eq:sigma-prob}), and becomes
$$E_f(\ket{\Psi}\bra{\Psi})=\sum_j p_j f(p_j)\ . $$
\end{theorem}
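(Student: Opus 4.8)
The plan is to reproduce the variational argument of \cite{VP98}: evaluate $S_f(\ket{\Psi}\bra{\Psi}\|\sigma)$ at the candidate $\sigma^*=\sigma(\{p_j\})=\sum_j p_j\ket{jj}\bra{jj}$, check that the map $\sigma\mapsto S_f(\ket{\Psi}\bra{\Psi}\|\sigma)$ is convex on the convex set $\cS$, and then verify the first-order optimality condition at $\sigma^*$; convexity upgrades the local condition to a global minimum. First I would record, exactly as in the proof of Theorem \ref{thm:max}, that for a pure state the quasi-relative entropy collapses to $S_f(\ket{\Psi}\bra{\Psi}\|\sigma)=\bra{\Psi}f(\sigma)\ket{\Psi}$. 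Evaluating at $\sigma^*$, whose eigenvectors $\ket{jj}$ span a subspace containing $\ket{\Psi}$, yields $\bra{\Psi}f(\sigma^*)\ket{\Psi}=\sum_j p_j f(p_j)$, the asserted value. Because $a_f=0$, the representation (\ref{low2}) exhibits $f$ as a positive superposition of the operator convex resolvents $x\mapsto(t+x)^{-1}$, so $f$ is operator convex and $\sigma\mapsto\bra{\Psi}f(\sigma)\ket{\Psi}$ is convex; it therefore suffices to show that for every separable $\sigma$ the one-sided derivative at $\sigma^*$ in the direction $\tau=\sigma-\sigma^*$ is nonnegative.

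Differentiating the resolvent under the integral (legitimate because (\ref{eq:C-f}) guarantees the relevant integrals converge to the finite constant $H_f$, even though $\sigma^*$ is rank-deficient), and using $a_f=0$ to discard the linear term, I would obtain
\[
\frac{\rd}{\rd\eps}\Big|_{0}\bra{\Psi}f(\sigma^*+\eps\tau)\ket{\Psi}=-\int_0^\infty\bra{\Psi}(t+\sigma^*)^{-1}\tau(t+\sigma^*)^{-1}\ket{\Psi}\,\rd\mu_f(t).
\]
Setting $\ket{v_t}=(t+\sigma^*)^{-1}\ket{\Psi}=\sum_j\tfrac{\sqrt{p_j}}{t+p_j}\ket{jj}$ and introducing the matrices $M_{jk}=\int_0^\infty\tfrac{\sqrt{p_jp_k}}{(t+p_j)(t+p_k)}\,\rd\mu_f(t)$ and $S_{jk}=\bra{jj}\sigma\ket{kk}$, this derivative equals $-\bigl(\sum_{j,k}M_{jk}S_{jk}-\sum_j M_{jj}p_j\bigr)$. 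The hypothesis (\ref{eq:C-f}) says exactly $M_{jj}=H_f$ for all $j$, so the subtracted term is $H_f\sum_j p_j=H_f$, while Cauchy--Schwarz in $L^2(\rd\mu_f)$ gives the uniform bound $0\le M_{jk}\le\sqrt{M_{jj}M_{kk}}=H_f$. The whole problem is thereby reduced to the single inequality $\sum_{j,k}M_{jk}S_{jk}\le H_f$ for all separable $\sigma$.

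The crux, and the step I expect to be the main obstacle, is this last inequality, since it is the only place where separability is actually used. Expanding $\sigma=\sum_i q_i\ket{a_i b_i}\bra{a_i b_i}$ gives $S_{jk}=\sum_i q_i\,w_i^{(j)}\overline{w_i^{(k)}}$ with the product entries $w_i^{(j)}=\langle j|a_i\rangle\langle j|b_i\rangle$, so that $\sum_{j,k}M_{jk}S_{jk}=\sum_i q_i\,\mathbf{w}_i^\dagger M\mathbf{w}_i$. The decisive feature of a product vector is the $\ell^1$ bound
\[
\sum_j\bigl|w_i^{(j)}\bigr|\le\Bigl(\sum_j|\langle j|a_i\rangle|^2\Bigr)^{1/2}\Bigl(\sum_j|\langle j|b_i\rangle|^2\Bigr)^{1/2}=1,
\]
again by Cauchy--Schwarz. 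Combined with $|M_{jk}|\le H_f$ this yields $\mathbf{w}_i^\dagger M\mathbf{w}_i\le H_f\bigl(\sum_j|w_i^{(j)}|\bigr)^2\le H_f$, hence $\sum_{j,k}M_{jk}S_{jk}\le H_f\sum_i q_i=H_f$.

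Putting the pieces together, the directional derivative is nonnegative for every $\sigma\in\cS$, and convexity of $\bra{\Psi}f(\cdot)\ket{\Psi}$ promotes this to global optimality of $\sigma^*$, giving $E_f(\ket{\Psi}\bra{\Psi})=\sum_j p_j f(p_j)$. The points I would treat most carefully are the justification of differentiation under the integral at the singular state $\sigma^*$ (where finiteness of $H_f$ is exactly what saves convergence) and the clean use of the product-vector $\ell^1$ bound together with $|M_{jk}|\le H_f$; the remaining manipulations are routine bookkeeping.
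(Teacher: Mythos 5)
Your proposal is correct and follows essentially the same route as the paper: a first-order variational argument at $\sigma^*=\sigma(\{p_j\})$, reduction of the directional derivative to $H_f-\sum_{jk}M_{jk}S_{jk}$ via the constancy hypothesis \eqref{eq:C-f}, the bound $M_{jk}\le H_f$ (the paper gets it from $(t+p)(t+q)\ge(t+\sqrt{pq})^2$ rather than Cauchy--Schwarz, but to the same effect), and the product-vector $\ell^1$ estimate $\sum_j|\langle j|a\rangle\langle j|b\rangle|\le 1$. Your explicit appeal to operator convexity of $f$ to upgrade the first-order condition to global optimality, and your remark on dominated convergence at the rank-deficient $\sigma^*$, make explicit two points the paper leaves implicit, but the argument is the same.
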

\begin{proof} Let  $\rho=\ket{\Psi}\bra{\Psi}$ and  $\sigma^*:=\sigma(\{p_j\})=\sum_jp_{j}\ket{jj}\bra{jj}$.
Consider a function
$$F(x,\sigma):=S_f(\rho\|(1-x)\sigma^*+x\sigma)\ . $$
For any state $\sigma$, adopting notation $\Delta(x):=\Delta_{(1-x)\sigma^*+x\sigma,\rho}$ and $\Delta^*:=\Delta_{\sigma^*,\rho}$, we have
\begin{align}
\frac{\partial F}{\partial x}(0,\sigma)=&\lim_{x\rightarrow 0}\frac{1}{x}\Tr\{ [f(\Delta(x))-f(\Delta^*)]\rho\}\\
=&\lim_{x\rightarrow 0}\frac{1}{x}\int_0^\infty d\mu_f(t)\Tr\left[\{(t\idty+\Delta(x))^{-1}-(t\idty+\Delta^*)^{-1}\}\rho\right]  \\
=&  \lim_{x\rightarrow 0}\frac{1}{x}\int_0^\infty d\mu_f(t)\Tr\left[\{(t\idty+\Delta(x))^{-1}(\Delta^*-\Delta(x))(t\idty+\Delta^*)^{-1}\}\rho\right]  \\
=& \int_0^\infty d\mu_f(t)\Tr\left[\{(t\idty+\Delta^*)^{-1}(\sigma^*-\sigma)(t\idty+\Delta^*)^{-1}\}(\rho)\right]  \\
\end{align}
The formula $A^{-1}-B^{-1}=A^{-1}(B-A)B^{-1}$ holds for any invertible operators $A$ and $B$. 

For $\rho$ and $\sigma^*$ defined above we have
$$\Delta^*=\sum_{j}p_{j}P_{j}\ , \qquad P_{j}(X)=\bra{jj}X\ket{\Psi}\ \ket{jj}\bra{\Psi}\ .$$ 
Then
$$(t\idty+\Delta^*)^{-1}(\rho)= \sum_{j}(t+p_{j})^{-1}P_{j}(I)=\sum_{j}(t+p_{j})^{-1}\bra{jj}\ket{\Psi}\ket{jj}\bra{\Psi}\ .$$
And therefore,
$$\{(t\idty+\Delta^*)^{-1}(\sigma^*-\sigma)(t\idty+\Delta^*)^{-1}\}(\rho)=\sum_{jk} (t+p_{k})^{-1}(t+p_{j})^{-1}\bra{jj}\ket{\Psi}\ket{kk}\bra{\Psi}\bra{kk}(\sigma^*-\sigma)\ket{jj}\bra{\Psi}  \ket{\Psi}\ . $$
Denote
$$g_t(p,q):= \sqrt{pq}\, (t+p)^{-1}(t+q)^{-1} \ .$$Then taking the trace above, we obtain
\begin{align}
&\Tr\left[\{(t\idty+\Delta^*)^{-1}(\sigma^*-\sigma)(t\idty+\Delta^*)^{-1}\}(\rho)\right]\nonumber\\
&=\sum_{jk} (t+p_{k})^{-1}(t+p_{j})^{-1}\bra{jj}\ket{\Psi}\bra{\Psi}\ket{kk}\bra{kk}(\sigma^*-\sigma)\ket{jj}\\
&=\sum_{jk}g_t(p_j,p_k)\bra{kk}(\sigma^*-\sigma)\ket{jj}\\
&=\sum_{j}p_j^2 (t+p_{j})^{-2}-\sum_{jk}g_t(p_j,p_k)\bra{kk}\sigma\ket{jj}\ .
\end{align}
Therefore,
\begin{align}
\frac{\partial F}{\partial x}(0,\sigma)=& \int_0^\infty d\mu_f(t)\sum_{j}p_j^2 (t+p_{j})^{-2}-\sum_{jk}g_t(p_j,p_k)\bra{kk}\sigma\ket{jj}\\
=&\,  \sum_{j} \int_0^\infty p_j^2 (t+p_{j})^{-2}d\mu_f(t)-\int_0^\infty \Tr\left(\sigma \ \sum_{jk}g_t(p_j,p_k)\ket{jj}\bra{kk}\right) d\mu_f(t)\\
\end{align}
Define
$$G_f(p,q):=\int_0^\infty g_t(p,p)d\mu_f(t)=\int_0^\infty \sqrt{pq} (t+p)^{-1}(t+q)^{-1} d\mu_f(t) \ ,$$
and
$$ H_f(p):=G_f(p,p)=\int_0^\infty p (t+p)^{-2}d\mu_f(t)\ .$$
Since
$(t+p)(t+q)\geq (\sqrt{pq}+t)^2, $
we have
$$0\leq G_f(p,q)=\int_0^\infty \sqrt{pq} (t+p)^{-1}(t+q)^{-1} d\mu_f(t)\leq \int_0^\infty \sqrt{pq} (t+\sqrt{pq})^{-2}d\mu_f(t)=H_f(\sqrt{pq})\ .  $$
Take $\sigma=\ket{\alpha\beta}\bra{\alpha\beta}$, where $\ket{\alpha}=\sum_ja_j\ket{j}$ and $\ket{\beta}=\sum_jb_j\ket{j}$. Then
\begin{align}
\frac{\partial F}{\partial x}(0,\ket{\alpha\beta}\bra{\alpha\beta})=&\sum_j p_jH_f(p_j)-\sum_{jk}G_f(p_j, p_k)\bra{kk}\sigma\ket{jj}\\
=&\sum_j p_jH_f(p_j)-\sum_{jk}G_f(p_j, p_k)\bra{kk}\ket{ab}\bra{ab}\ket{jj}\\
=&\sum_j p_jH_f(p_j)-\sum_{jk}G_f(p_j, p_k)a_kb_k\overline{a_j}\overline{b_j}
\end{align}
We assumed that the function $f$ is such that $H_f(p)=H_f$ is independent of $p$, then
$$\frac{\partial F}{\partial x}(0,\ket{\alpha\beta}\bra{\alpha\beta})=H_f-\sum_{jk}G_f(p_j, p_k)a_kb_k\overline{a_j}\overline{b_j}\ . $$
And therefore,
$$\left| \frac{\partial F}{\partial x}(0,\ket{\alpha\beta}\bra{\alpha\beta})-H_f\right|\leq \sum_{jk}G_f(p_j, p_k) |a_j||b_j| |a_k||b_k|\leq  \sum_{jk}H_f(\sqrt{p_j p_k}) |a_j||b_j| |a_k||b_k|\leq H_f\ .$$
Since every $\sigma\in \cS$ can be written as a convex combination $\sigma=\sum_jq_j\ket{\alpha_j\beta_j}\bra{\alpha_j\beta_j}$, we obtain a non-negative partial derivate
$ \frac{\partial F}{\partial x}(0,\sigma)\geq 0.$
Moreover, quasi-relative entropy of entanglement for a pure state is calculated to be
$$E_f(\ket{\Psi}\bra{\Psi})=S_f(\ket{\Psi}\bra{\Psi}\|\sigma^*)=\Tr(f(\Delta^*)\ket{\Psi}\bra{\Psi})=\sum_jp_jf(p_j)\ . $$
\end{proof}

\begin{theorem}\label{thm:qubit}
Let $\ket{\Psi}=\sqrt{p}\ket{00}+\sqrt{1-p}\ket{11}$. Let $f$ be an operator monotone decreasing function $f$, such that $f(1)=0$ and $a_f=0$ in (\ref{low}). Let $f$ be such that for any $p\in[0,1]$ the function
$$H_f(p)=\int_0^\infty p (t+p)^{-2}d\mu_f(t)\ , $$
is either monotonically increasing or decreasing in $p$ (here $\mu_f(t)$ is the measure in the integral representation (\ref{low})). Then, the closest separable state to $\ket{\Psi}$, when measured by the quasi-relative entropy $S_f$, is  in a form $\sigma(\{q, 1-q\})$ (\ref{eq:sigma-prob}) for some $0<q< 1$. In this case, the quasi-relative entropy becomes, 
$$E_f(\ket{\Psi}\bra{\Psi})=\sum_jp_jf(q_j)\ . $$
\end{theorem}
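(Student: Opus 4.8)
The plan is to follow the variational strategy of Theorem \ref{Entropy}: exhibit a diagonal candidate $\sigma^{*}=\sigma(\{q,1-q\})$, use that $\sigma\mapsto S_f(\rho\|\sigma)$ is convex on the convex set $\cS$ (as $f$ is operator monotone decreasing, hence operator convex), and verify that the one-sided directional derivative of $F(x,\sigma)=S_f(\rho\|(1-x)\sigma^{*}+x\sigma)$ at $x=0$ is non-negative in every separable direction; convexity then upgrades this to global optimality. Since $\rho=\ket{\Psi}\bra{\Psi}$ is pure, formula (\ref{eq:formula}) collapses to $S_f(\rho\|\sigma)=\bra{\Psi}f(\sigma)\ket{\Psi}$, so I would record this simplification first and then minimize $\bra{\Psi}f(\sigma)\ket{\Psi}$ over $\sigma\in\cS$.

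Following the derivative computation of Theorem \ref{Entropy}, but now with $\sigma^{*}=\sigma(\{q,1-q\})$ whose eigenvalues $q_0=q,\ q_1=1-q$ need \emph{not} equal the $p_j$, the directional derivative becomes
\[
\frac{\partial F}{\partial x}(0,\sigma)=\sum_j q_j M_{jj}-\sum_{jk}M_{jk}\bra{jj}\sigma\ket{kk},\qquad M_{jk}:=\int_0^\infty \sqrt{p_jp_k}\,(t+q_j)^{-1}(t+q_k)^{-1}\,d\mu_f(t),
\]
where $a_f=0$ is used so that $f$ has no linear part. This expression is affine in $\sigma$, and every separable state is a convex combination of pure product states $\ket{\alpha\beta}\bra{\alpha\beta}$, so it suffices to test product states; writing $c_j=a_jb_j$ gives $\bra{jj}\sigma\ket{kk}=c_j\overline{c_k}$, with the Cauchy--Schwarz constraint $|c_0|+|c_1|\le 1$. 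Since $M_{jk}\ge 0$, the worst phase choice forces $\Re(c_0\overline{c_1})=|c_0||c_1|$, and I then maximize $M_{00}|c_0|^2+M_{11}|c_1|^2+2M_{01}|c_0||c_1|$ over the simplex $\{|c_0|+|c_1|\le1\}$. I expect this maximum to sit at a vertex, i.e.\ at $\sigma=\ket{00}\bra{00}$ or $\ket{11}\bra{11}$, giving $\max(M_{00},M_{11})$; non-negativity of the derivative for all separable directions then reduces to the single scalar condition $M_{00}=M_{11}$, equivalently
\[
\tfrac{p}{q}\,H_f(q)=\tfrac{1-p}{1-q}\,H_f(1-q),
\]
which is precisely the stationarity condition for the one-variable convex function $r\mapsto pf(r)+(1-p)f(1-r)$ (using $f'(x)=-\int_0^\infty(t+x)^{-2}d\mu_f(t)$).

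The two delicate points are as follows. First, the vertex computation: I must confirm that the convex quadratic form is maximized at a vertex rather than in the interior of the simplex, so that the binding directions are the diagonal product states $\ket{jj}\bra{jj}$. Once $M_{00}=M_{11}=:\tilde H$ holds, the derivative collapses to $\tilde H\bigl(1-(|c_0|+|c_1|)^2\bigr)+2|c_0||c_1|(\tilde H-M_{01})\ge 0$, and the bound $M_{01}\le\sqrt{M_{00}M_{11}}=\tilde H$ (Cauchy--Schwarz applied to $v_j(t)=\sqrt{p_j}(t+q_j)^{-1}$ in $L^2(d\mu_f)$) closes this step. Second, and this is the main obstacle, I must show the scalar equation $M_{00}(q)=M_{11}(q)$ has a root in the open interval. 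Writing it as $pK(q)=(1-p)K(1-q)$ with $K(x)=\int_0^\infty(t+x)^{-2}d\mu_f(t)=H_f(x)/x$, the ratio $K(q)/K(1-q)$ is automatically strictly monotone in $q$ and equals $1$ at $q=\tfrac12$, so a root is unique whenever it exists; here I would invoke the hypothesis that $H_f$ is monotone (increasing or decreasing) to control the boundary behaviour of $K$ as $q\to 0^+$ and $q\to 1^-$ and thereby force the required sign change, placing $q$ strictly inside $(0,1)$. Evaluating $\bra{\Psi}f(\sigma^{*})\ket{\Psi}=pf(q)+(1-p)f(1-q)=\sum_j p_j f(q_j)$ then yields the stated entanglement value, which in general differs from the relative-entropy answer $q=p$ exactly because $H_f$ need not be constant.
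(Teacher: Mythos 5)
Your proposal is correct and follows essentially the same route as the paper's proof: the same candidate $\sigma^{*}=\sigma(\{q,1-q\})$, the same directional-derivative computation (your $M_{jk}$ is exactly the paper's $\sqrt{p_jp_k/(q_jq_k)}\,G_f(q_j,q_k)$ and $\sum_j q_jM_{jj}=\sum_j p_jH_f(q_j)$), the same determination of $q$ from $M_{00}=M_{11}$ (the paper's equation \eqref{eq:p-q}), and the same closing Cauchy--Schwarz bound $M_{01}\le\sqrt{M_{00}M_{11}}$. The one step you rightly single out as delicate --- existence of a root $q\in(0,1)$ of $pK(q)=(1-p)K(1-q)$, which requires $K(x)=H_f(x)/x\to\infty$ as $x\to0^{+}$ (true for $-\log x$ and $-x^{\alpha}$, but not an automatic consequence of monotonicity of $H_f$) --- is treated no more rigorously in the paper, which simply records \eqref{eq:p-q} and its ordering consequences.
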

\begin{proof} Let us denote $\sigma^*=\sigma(\{q, 1-q\})=q\ket{00}\bra{00}+(1-q)\ket{11}\bra{11}$.
From a  proof of Theorem \ref{Entropy}, we have 
$$\Delta^*=\sum_{j}q_{j}P_{j}\ , \qquad P_{j}(X)=\bra{jj}X\ket{\Psi}\ \ket{jj}\bra{\Psi}\ .$$ 
Then denoting $\rho=\ket{\Psi}\bra{\Psi}$, 
$$(t\idty+\Delta^*)^{-1}(\rho)= \sum_{j}(t+q_{j})^{-1}P_{j}(I)=\sum_{j}(t+q_{j})^{-1}\bra{jj}\ket{\Psi}\ket{jj}\bra{\Psi}\ .$$
And therefore, for any state $\sigma$, we have
$$\{(t\idty+\Delta^*)^{-1}(\sigma^*-\sigma)(t\idty+\Delta^*)^{-1}\}(\rho)=\sum_{jk} (t+q_{k})^{-1}(t+q_{j})^{-1}\bra{jj}\ket{\Psi}\ket{kk}\bra{\Psi}\bra{kk}(\sigma^*-\sigma)\ket{jj}\bra{\Psi}  \ket{\Psi}\ . $$
And taking the trace, we obtain
\begin{align}
&\Tr\left[\{(t\idty+\Delta^*)^{-1}(\sigma^*-\sigma)(t\idty+\Delta^*)^{-1}\}(\rho)\right]\nonumber\\
&=\sum_{jk} (t+q_{k})^{-1}(t+q_{j})^{-1}\bra{jj}\ket{\Psi}\bra{\Psi}\ket{kk}\bra{kk}(\sigma^*-\sigma)\ket{jj}\\
&=\sum_{jk}g_t(p_j,p_k)\bra{kk}(\sigma^*-\sigma)\ket{jj}\\
&=\sum_{j}p_jq_j (t+q_{j})^{-2}-\sum_{jk}g_t(p_j,p_k)\bra{kk}\sigma\ket{jj}\ .
\end{align}
Therefore,
\begin{align}
\frac{\partial F}{\partial x}(0,\sigma)=& \int_0^\infty d\mu_f(t)\sum_{j}p_jq_j (t+q_{j})^{-2}-\sum_{jk}g_t(p_j,p_k)\bra{kk}\sigma\ket{jj}\\
=&\,  \sum_{j} \int_0^\infty p_jq_j (t+q_{j})^{-2}d\mu_f(t)-\int_0^\infty \Tr\left(\sigma \ \sum_{jk}\sqrt{\frac{p_jp_k}{q_jq_k}}g_t(p_j,p_k)\ket{jj}\bra{kk}\right) d\mu_f(t)\\
=& \sum_j p_jH_f(q_j)- \sum_{jk}\sqrt{\frac{p_jp_k}{q_jq_k}}G_f(q_j, q_k)\bra{kk}\sigma\ket{jj}\ .
\end{align}
Therefore, for $\sigma=\ket{\alpha\beta}\bra{\alpha\beta}$ with $\ket{\alpha}=\sum_ja_j\ket{j}$ and $\ket{\beta}=\sum_jb_j\ket{j}$, we obtain
$$\frac{\partial F}{\partial x}(0,\ket{\alpha\beta}\bra{\alpha\beta})=\sum_j p_jH_f(q_j)-\sum_{jk}\sqrt{\frac{p_jp_k}{q_jq_k}}G_f(q_j, q_k)a_kb_k\overline{a_j}\overline{b_j}\ . $$
For $\ket{\Psi}=\sqrt{p}\ket{00}+\sqrt{1-p}\ket{11}$, take $\sigma=\ket{00}\bra{00}$,
\begin{equation}\label{eq:partial-0}
\frac{\partial g}{\partial x}(0,\ket{00}\bra{00})=(1-p)H_f(1-q)-p\frac{1-q}{q}H_f(q) \ .
\end{equation}
And for $\sigma=\ket{11}\bra{11}$,
\begin{equation}\label{eq:partial-1}
\frac{\partial F}{\partial x}(0,\ket{11}\bra{11})=pH_f(q)-(1-p)\frac{q}{1-q}H_f(1-q)=-\frac{q}{1-q}\frac{\partial F}{\partial x}(0,\ket{00}\bra{00}) \ .
\end{equation}
Since we need both (\ref{eq:partial-0}) and (\ref{eq:partial-1}) to be non-negative, choose $q$ such that $\frac{\partial F}{\partial x}(0,\ket{00}\bra{00})=0$. In other words, choose $q$ such that
 $$\frac{1-p}{1-q}H_f(1-q)-\frac{p}{q}H_f(q)=0\ .$$
 This expression can be written as
 \begin{equation}\label{eq:p-q}
 p=\frac{qH_f(1-q)}{qH_f(1-q)+(1-q)H_f(q)}=\frac{\int_0^\infty (t+1-q)^{-2}d\mu_f(t)}{\int_0^\infty \left[(t+1-q)^{-2}+(t+q)^{-2}\right]d\mu_f(t)}\ .
 \end{equation}
Let us note a few cases here:
\begin{itemize}
\item If $H_f(p)$ is a constant, then $p=q$ for all $p\in[0,1]$, which is in line with Theorem \ref{Entropy}.
\item If $p=1/2$, then since the function $(t+x)^{-2}$ is monotone decreasing in $x$, and $\mu_f$ is a positive measure, we have $q=1/2$. And vice versa. This is in line with Theorem \ref{thm:max}.
\item If $p>1/2$, then it implies that 
$$\int_0^\infty (t+q)^{-2}d\mu_f(t)<\int_0^\infty (t+1-q)^{-2}d\mu_f(t)\ .$$ 
Since the function $(t+x)^{-2}$ is monotone decreasing in $x$, and $\mu_f$ is a positive measure, we have that $q>1-q$, or $q>1/2$. And vice versa, i.e. if $q>1/2$, then $p>1/2$.
\item If  $H_f(q)$ is an increasing function and  $p>1/2$, then we also have that  $p<q$. So, 
$$\frac{1}{2}<p<q\leq1\ . $$
And similarly, 
$$0<q<p<\frac{1}{2}\ . $$
\item Similarly, if $H_f(q)$ is a decreasing function, we have 
$$0<p<q<\frac{1}{2},\qquad  \frac{1}{2}<q<p\leq1\ .$$
\end{itemize}
Consider the difference
\begin{align}
\left|\frac{\partial F}{\partial x}(0,\ket{\alpha\beta}\bra{\alpha\beta})-\sum_j p_jH_f(q_j)\right| &=\left|\sum_{jk}\sqrt{\frac{p_jp_k}{q_jq_k}}G_f(q_j, q_k)a_kb_k\overline{a_j}\overline{b_j}\right|\\
&\leq \sum_{jk}\sqrt{\frac{p_jp_k}{q_jq_k}}G_f(q_j, q_k)|a_kb_k{a_j}{b_j}| \\
&=2\sum_{k<j}\sqrt{\frac{p_jp_k}{q_jq_k}}G_f(q_j, q_k)|a_kb_k{a_j}{b_j}|+\sum_j\frac{p_j}{q_j}H_f(q_j)|a_j|^2|b_j|^2\\
\end{align}
The last equality is due to the fact that $G_f$ is symmetric in its arguments.
Therefore, using (\ref{eq:p-q}), we obtain
\begin{align}
&\left|\frac{\partial F}{\partial x}(0,\ket{\alpha\beta}\bra{\alpha\beta})-pH_f(q)-(1-p)H_f(1-q)\right|\\
&=\left|\frac{\partial g}{\partial x}(0,\ket{\alpha\beta}\bra{\alpha\beta})-\frac{p}{q}H_f(q)\right|\\
&= 2\sqrt{\frac{p(1-p)}{q(1-q)}}G_f(q, 1-q)|a_0a_1b_0b_1|+\frac{p}{q}H_f(q)|a_0|^2|b_0|^2+\frac{1-p}{1-q}H_f(1-q)|a_1|^2|b_1|^2\\
&= 2\sqrt{\frac{p(1-p)}{q(1-q)}}G_f(q, 1-q)|a_0a_1b_0b_1|+\frac{p}{q}H_f(q)\left(|a_0|^2|b_0|^2+|a_1|^2|b_1|^2\right)\\
\end{align}
Let us denote $A:=|a_0||b_0|$ and $B:=|a_1||b_1|$. Then $(A+B)^2\leq\sum_j|a_j|^2\sum_j|b_j|^2\leq 1.$ Therefore, $B\leq 1-A$. And thus
\begin{align}
&\left|\frac{\partial F}{\partial x}(0,\ket{\alpha\beta}\bra{\alpha\beta})-\frac{p}{q}H_f(q)\right|\\
&= 2\sqrt{\frac{p(1-p)}{q(1-q)}}G_f(q, 1-q)|a_0a_1b_0b_1|+\frac{p}{q}H_f(q)\left(|a_0|^2|b_0|^2+|a_1|^2|b_1|^2\right)\\
&\leq \frac{p}{q}H_f(q)+2A(1-A)\left[\sqrt{\frac{p(1-p)}{q(1-q)}}G_f(q, 1-q)- \frac{p}{q}H_f(q)  \right]\ .
\end{align}
Note that from  (\ref{eq:p-q}) the following holds
\begin{align}
\sqrt{\frac{p(1-p)}{q(1-q)}}G_f(q, 1-q)- \frac{p}{q}H_f(q)&=\sqrt{\frac{p(1-p)}{q(1-q)}}\left[ G_f(q, 1-q)-\sqrt{\frac{p(1-q)}{q(1-p)}}H_f(q)\right]\\
&=\sqrt{\frac{p(1-p)}{q(1-q)}}\left[  G_f(q, 1-q)-\sqrt{H_f(q)H_f(1-q)}\right]\\
&\leq 0 \ .
\end{align}
The last inequality is due to the Cauchy-Swartz inequality:
\begin{align}
G_f(p,q)^2&=pq\left(\int_0^\infty (t+p)^{-1}(t+q)^{-1}d\mu_f(t)\right)^2\\
&\leq pq\int_0^\infty (t+p)^{-2}d\mu_f(t) \int_0^\infty (t+q)^{-2}d\mu_f(t) =H_f(p)H_f(q)\ .
\end{align}
Therefore, going back we obtain 
\begin{align}
&\left|\frac{\partial F}{\partial x}(0,\ket{\alpha\beta}\bra{\alpha\beta})-\frac{p}{q}H_f(q)\right|\\
&\leq \frac{p}{q}H_f(q)+2A(1-A)\left[\sqrt{\frac{p(1-p)}{q(1-q)}}G_f(q, 1-q)- \frac{p}{q}H_f(q)  \right]\\
&\leq \frac{p}{q}H_f(q)\ .
\end{align}
This means that $\frac{\partial F}{\partial x}(0,\ket{\alpha\beta}\bra{\alpha\beta})\geq 0$. Since every $\sigma\in \cS$ can be written as a convex combination $\sigma=\sum_jq_j\ket{\alpha_j\beta_j}\bra{\alpha_j\beta_j}$, we obtain a non-negative partial derivate
$ \frac{\partial F}{\partial x}(0,\sigma)\geq 0.$ Therefore, the chosen $\sigma^*$ is the closest separable state to $\ket{\Psi}$.
\end{proof}

As a consequence we obtain partial cases of Theorems discussed above:
\begin{itemize}
\item If $H_f(p)$ is a constant, then $p=q$ for all $p\in[0,1]$. And the closest separable state to a state $\ket{\Psi}=\sqrt{p}\ket{00}+\sqrt{1-p}\ket{11}$ is the state $\sigma^*=p\ket{00}\bra{00}+(1-p)\ket{11}\bra{11}$, which is a case of a two-qubit system in Theorem \ref{Entropy}.
\item If $p=1/2$, then  $q=1/2$. Therefore, the closest separable state to a maximally entangled state $\ket{\Psi}=\frac{1}{\sqrt{2}}(\ket{00}+\ket{11})$ is the state $\sigma^*=\frac{1}{2}(\ket{00}\bra{00}+\ket{11}\bra{11})$. This is a  case of a two-qubit system in Theorem \ref{thm:max}. 
\end{itemize}

\section{Examples}

\begin{example}
For $f(x)=-\log(x)$, the quasi-relative entropy becomes regular Umegaki relative entropy
$$S_{-log}(\rho\|\sigma)=S(\rho\|\sigma)=\Tr(\rho\log\rho-\rho\log\sigma)\ . $$
For this function, $H_f(p)=1$ for all $p\in[0,1]$.
As it was shown in \cite{VP98} the relative entropy of entanglement reduces to the marginal entropy on pure states, coinciding with Theorem \ref{Entropy},
$$E(\ket{\Psi}\bra{\Psi})=S(\rho_A)=-\sum_j p_j \log p_j\ , $$
where $\rho_A=\Tr_B \ket{\Psi}\bra{\Psi}=\sum_j p_j\ket{j}\bra{j}.$ This holds for any pure state $\rho=\ket{\Psi}\bra{\Psi}$.
\end{example}

\begin{example}
The power function $f(x)=1-x^{1-\alpha}$ for $\alpha\in(0,1)$ defines the quasi-relative entropy as
$$S_\alpha(\rho\|\sigma):=1-\Tr(\rho^\alpha\sigma^{1-\alpha})\ . $$
By Theorem \ref{thm:max}, for a maximally entangled state $\ket{\Psi^+}=\frac{1}{\sqrt{d}}\sum_j\ket{jj}$, the minimum of the $\alpha$-relative entropy over separable states is
$$ \min_{\sigma\in\cS}S_\alpha \left(\ket{\Psi^+}\bra{\Psi^+}\|\sigma\right)=1-d^{\alpha-1}\ .$$

\end{example}

\begin{example}
Renyi entropy of entanglement is defined as
$$E_\alpha^R(\rho)=\min_{\sigma\in \cS} S^R_\alpha(\rho\|\sigma)\ , $$
for $\alpha\geq 0$, $\alpha\neq 1$, where the Renyi relative entropy is defined as
$$S^R_\alpha(\rho\|\sigma)= \frac{1}{\alpha-1}\log\Tr(\rho^\alpha\sigma^{1-\alpha})=\frac{1}{\alpha-1}\log\left(1-S_\alpha(\rho\|\sigma) \right)\ .$$
Consider the case when  $\alpha<1$. Then
$$E_\alpha^R(\rho)= \frac{1}{\alpha-1}\log[1-\min_{\sigma\in \cS} S_\alpha(\rho\|\sigma)]\ . $$
And therefore, for a maximally entangled  state, we have
$$E_\alpha^R(\ket{\Psi^+}\bra{\Psi^+})= \frac{1}{\alpha-1}\log d^{\alpha-1}=\log d . $$
This means that for a maximally entangled state, the Renyi entropy of entanglement reduces to the marginal Renyi entropy, since
$$S_\alpha^R(\rho_A)=\frac{1}{1-\alpha}\log\Tr\rho_A^\alpha=\log d=E_\alpha^R(\ket{\Psi^+}\bra{\Psi^+})\ , $$
where $\rho_A=\Tr_B\ket{\Psi^+}\bra{\Psi^+}=\frac{1}{d}\sum_j\ket{jj}\bra{jj}.$ 
This is in line with the explicit expression found for the  Renyi entropy of entanglement in \cite{CVGG20, ZHC17}. Note that in \cite{CVGG20} it was shown that for a pure state $\ket{\Psi}=\sum_j\sqrt{p_j}\ket{jj}$ and $\alpha<2$, the closest separable state is 
$$\sigma=\frac{1}{P_\alpha}\sum_j p_j^{1/\alpha} \ket{jj}\bra{jj}\ ,$$
where $P_\alpha=\sum_j p_j^{1/\alpha}.$ The Renyi entropy of entanglement is then given by
$$E_\alpha^R(\Psi)=\frac{\alpha}{\alpha-1}\log\sum_j p_j^{1/\alpha}\ . $$ 
This would be a generalization of Theorem \ref{thm:qubit} in Renyi relative entropy case, although more work is needed to explicitly show that $H_f(p)$ is monotonic in this case, and to explicitly find coefficients in the separable state to apply the Theorem for a general pure state.

\end{example}

\begin{example}
Tsallis entropy of entanglement is defined as
$$E_\alpha^T(\rho)=\min_{\sigma\in \cS} S^T_\alpha(\rho\|\sigma)\ , $$
for  $\alpha\geq 0$, $\alpha\neq 1$, where Tsallis relative entropy is defined as
$$S_\alpha^T(\rho\|\sigma)=\frac{1}{1-\alpha}\left(1-\Tr(\rho^\alpha\sigma^{1-\alpha}) \right)=\frac{1}{1-\alpha} S_\alpha(\rho\|\sigma) \ .$$
By Theorem \ref{thm:max}, for $\alpha<1$, and a maximally entangled  state, we have
$$E_\alpha^T(\ket{\Psi^+}\bra{\Psi^+})=\frac{1}{1-\alpha}\min_{\sigma\in\cS}S_\alpha(\ket{\Psi^+}\bra{\Psi^+}\|\sigma)=\frac{1}{1-\alpha}\left[ 1-d^{\alpha-1}\right]\ .$$
Interestingly enough, Tsallis entropy of entanglement does not reduce to the marginal Tsallis entropy on a maximally entangled state, since
$$S_\alpha^T(\rho_A)=\frac{1}{\alpha-1}(1-\Tr\rho_A^\alpha)=\frac{1}{\alpha-1}\left[ 1-d^{1-\alpha}\right]\ , $$
where $\rho_A=\Tr_B\ket{\Psi^+}\bra{\Psi^+}=\frac{1}{d}\sum_j\ket{jj}\bra{jj}.$

\end{example}

\section{Conclusion}
We asked a question: what is the closest separable state to a pure state when the distance between two states is measured by the quasi-relative entropy? We answered this question in three cases: when a state is maximally entangled; when a state is on a bipartite qubit systems with a large class of functions defining quasi-relative entropy; and for any pure state and a certain class of functions. It would be interesting to generalize a qubit case (Theorem \ref{thm:qubit}) to any finite dimension and/or remove the assumptions on the functions $f$, which would result in a complete answer of the initial question.

There are multiple directions one could take to further develop this problem. Let us a list just a few possibilities, besides the direct improvement of the presented results.

\begin{itemize}
\item It is important to answer the same question for a variety of other entanglement measures, not obtained from the quasi-relative entropy of entanglement. These measures can be induced, for example, by sandwiched relative entropy \cite{MLDSFT13, WWY14}, $\alpha-z$-Renyi relative entropy \cite{AD15, CFL18, Z20}, geometric Renyi divergence \cite{FF19, M15}, to name a few.

\item A more general question of explicitly finding a closest separable state for a mixed bipartite entangled stats is still open. The way to find an answer for two-qubit systems has been discussed in \cite{MI08} for some restricted cases. For a review on the problem, see \cite{KW05}. In \cite{BV04, DPS02, EHGC04, GZFG15, ZFG10} authors provide an algorithm for the calculation of the relative entropy of entanglement and for the separability problem. The existence of a similar algorithm could be explored for a quasi-relative entropy.

\item An inverse problem has be considered for a while: given a separable state, which entangled states are the closest to it? For two qubit states a closed formula for entangled states has been found in \cite{MI08}, which was generalized to any dimensions and any number of parties in \cite{FG11}. A natural question arises to generalize this result for quasi-relative entropy.

\item As mentioned in Introduction, there are a number of entanglement measure for a mixed state. For a review see, for example, \cite{3H00}. It is interesting to note that if an entanglement of one state is less than an entanglement of another state when measured by, say, relative entropy, it is not necessarily the case that this order will be preserved using other measures. In fact it was demonstrated in \cite{VP00} that all  asymptotic entanglement measures satisfying (E1)-(E3) are either identical or have a different ordering on the set of all quantum states. So the following problem arises: for a fixed entanglement value of one kind of entanglement measure, find a state with a maximum or minimum value of entanglement measured by another. Some numerical results have been done, for example, in \cite{BHLM13, MG04, MIHG08}. Adding quasi-relative entropy into the mix, poses new frontiers in this problem.

\end{itemize}

\vspace{0.3in}
\textbf{Acknowledgments.} The author would like to thank anonymous referees who worked diligently to make this manuscript better on multiple levels.  A. V. is supported by NSF grant DMS-1812734.

\end{document}